\newtheorem{theorem}{Theorem}
\date{}
\begin{document}
\title{PROPm Allocations of Indivisible Goods to Multiple Agents}
\author[a]{Artem Baklanov} \author[b]{Pranav Garimidi} \author[c]{Vasilis Gkatzelis} \author[c]{Daniel Schoepflin}
\affil[a]{Higher School of Economics, St. Petersburg}
\affil[b]{Columbia University}
\affil[c]{Drexel University}
\affil[ ]{{apbaklanov@hse.ru,  pg2682@columbia.edu, \{gkatz,schoep\}@drexel.edu}}
\date{}

\maketitle

\providecommand{\shortcite}[1]{\cite{#1}}

\newtheorem{corollary}{Corollary}[theorem]
\newtheorem{lemma}[theorem]{Lemma}
\newtheorem{observation}[theorem]{Observation}

\newcommand{\adjVal}[2][X]{\tilde{v}_{#2}(#1)}
\newcommand{\val}[3][X]{v_{#2}(#1_{#3})}
\newcommand{\minIn}[3][X]{m_{#2}(#1_{#3})}
\newcommand{\maxIn}[3][X]{M_{#2}(#1_{#3})}
\newcommand{\propXt}{\text{PROPm}}
\newcommand{\spx}{\text{CP}_i}
\newcommand{\spxi}[3]{\text{CP}_{#1}(#2,#3)}
\newcommand{\aEFx}{\text{a-EFx}}

\newcommand{\agents}{N}
\newcommand{\items}{M}
\newcommand{\allocationverbose}{(X_1, X_2, \dots, X_n)}
\newcommand{\minbundle}[2]{m_{#1}(#2)}
\newcommand{\minbundlei}[1]{\minbundle{i}{#1}}
\newcommand{\maxmin}[2]{d_{#1}(#2)}
\newcommand{\maxmini}{\maxmin{i}{X}}

\newcommand{\simplebundle}{{\text{bundle}}}
\newcommand{\bundleCollection}{\mathcal{A}}
\newcommand{\metabundle}{{\text{sub-problem}}}
\newcommand{\metaB}{(\bundleCollection, \agents')}
\newcommand{\propValue}{{\text{proportional share}}}
\newcommand{\metaallocation}{{\text{decomposition}}}
\newcommand{\incomplete}{{\text{incomplete}}}
\newcommand{\unassigned}{\text{unmatched}}
\newcommand{\lc}[2]{lc(#1, #2)}

\begin{abstract}
We study the classic problem of fairly allocating a set of indivisible goods among a group of agents, and focus on the notion of approximate proportionality known as PROPm. Prior work showed that there exists an allocation that satisfies this notion of fairness for instances involving up to five agents, but fell short of proving that this is true in general. We extend this result to show that a PROPm allocation is guaranteed to exist for all instances, independent of the number of agents or goods. Our proof is constructive, providing an algorithm that computes such an allocation and, unlike prior work, the running time of this algorithm is polynomial in both the number of agents and the number of goods.
\end{abstract}

\section{Introduction}

%We consider the well-studied problem of fairly distributing a set of scarce resources among a group of $n$ agents. This problem is at the heart of the long literature on fair division, initiated by \citet{Stern}, which has recently received renewed interest, partly due to the proliferation of automated resource allocation processes. To reach a fair outcome, such processes need to take into consideration the preferences of the agents, i.e., how much each agent values each of the resources. The most common modelling assumption regarding these preferences is that they are \emph{additive}: each agent $i$ has a value $v_{ij}\geq 0$ for each resource $j$, and her value for a set $S$ of resources is $v_i(S)=\sum_{j\in S}v_{ij}$. But, what would constitute a ``fair'' outcome given such preferences?

The fair allocation of scarce resources to a group of competing agents is a fundamental problem in both computer science and economics.  A particularly natural and well-studied setting is the fair allocation of indivisible goods to agents with additive valuations.  Under additive valuations, an agent $i$ has a value $v_{ij}$ for each good $j$ and her value for a bundle of goods $S$ is equal to the sum of the values over each good $j \in S$, i.e., $v_i(S) = \sum_{j \in S}{v_{ij}}$.  An indivisible good cannot be split and shared by more than one agent so achieving ``fairness'' with indivisible goods is often a difficult task.  Even determining the appropriate definition of fairness can be non-trivial.

One standard notion of fairness is \emph{proportionality}.  An allocation of a set of goods $M$ to $n$ agents is proportional if each agent $i$ receives a set of goods $S_i$ for which she has value $v_i(S_i) \geq \frac{1}{n}v_i(M)$.  In words, proportionality requires that every agent obtains at least a $1/n$ fraction of her total value.
%as much as she would if the total value were able to be split exactly equally (and, hence, exactly ``fairly'') among all agents.  
Unfortunately, when items are indivisible achieving proportionality may not be possible. For instance, when allocating a single indivisible good there is no way to provide \emph{any} positive value to anyone other than the one agent that receives the good. In fact, this example shows that one cannot even guarantee any \emph{multiplicative} approximation of proportionality. On the other hand, this instance does not rule out the existence of allocations satisfying \textit{additive} relaxations of proportionality.  

Three notable additive relaxations of proportionality are PROP1, PROPx, and PROPm. Each of these notions requires that agent $i$ must receive value no less than $\frac{1}{n}v_i(M) - d_i$ for some appropriately defined $d_i\geq 0$.  The least demanding of these notions is PROP1, wherein $d_i$ is the \emph{largest} value that agent $i$ has for any item allocated to another agent~\cite{CFS17}. On the other extreme, for PROPx $d_i$ is the \emph{smallest} value that agent $i$ has for any item allocated to another agent~\cite{Moulin2019}. PROP1 is known to be easy to satisfy and provides weak guarantees, while PROPx is overly demanding and known to not always exist. In the case of PROPm, $d_i$ corresponds to the \textit{maximin} value that agent $i$ has among items allocated to other agents~\cite{PROPm}. PROPm sits somewhere between PROP1 and PROPx, and it is the focus of this work.  %good not allocated to them to their bundle. In other words, the bundle each agent recieves is no more than being 1 good away from proportional. What exactly this 1 good can be varies between the notions. PROP1 allows this to be any good that an agent didn't receive while PROPx says this good can only be the least valuable item an agent didn't receive. Previous work has shown that PROP1 is quite easy to achieve and that there are instances where PROPx is impossible to satisfy. Thus, PROPm was proposed as a middle ground between the 2 notions. 

Baklanov \textit{et al.}~\shortcite{PROPm} demonstrated that there always exists a PROPm allocation for problem instances with up to five agents.  They also demonstrate that many other alternative relaxations of proportionality (e.g., letting $d_i$ be the value of the minimax value item, the median value item, and the average value item) fail to exist even for instances of three agents.  PROPm then seems to be a rather unique notion of approximate proportionality in that it strikes a balance between providing non-trivial guarantees and seemingly being plausible to exist in general cases.  However, the techniques used to prove this existence result required extensive case analysis, suggesting that they would not be useful toward an analogous proof for instances with many agents.  Two natural questions then arise from \cite{PROPm}: Are PROPm allocations always guaranteed to exist for any number of agents?  If so, can they be efficiently computed?  In this work, we answer both these questions in the affirmative.

\section{Related Work}

%The \emph{proportionality up to the most valued item} (PROP1) notion is a relaxation of proportionality that was introduced by \citet{CFS17}, who observed that there always exists a Pareto optimal allocation that satisfies PROP1. \citet{ACIW19} later extended  this notion to settings where the objects being allocated are chores, i.e., the valuations are negative, and very recently \citet{AMS2020} provided a strongly polynomial time algorithm for computing allocations that are Pareto optimal and PROP1 for both goods and chores. On the other extreme, it is known that the notion of \emph{proportionality up to the least valued item} (PROPx) may not be achievable even for small instances with three agents~\cite{Moulin2019,FSsurvey,AMS2020}.

As discussed above, it is impossible to guarantee any multiplicative approximation of proportionality in the indivisible items setting.  The first additive approximation, ``proportionality up to the most valued item'' (PROP1), was originally proposed by Conitzer \textit{et al.}~\shortcite{CFS17} where the authors demonstrated that there always exists a Pareto optimal allocation that is also PROP1.  On the other hand, Moulin~\shortcite{Moulin2019} showed that if we instead consider ``proportionality up to the least valued item'' (PROPx) we can no longer guarantee existence.  Moreover, Aziz \textit{et al.}~\shortcite{AMS2020} demonstrated that PROPx allocations may not exist even for instances with only three agents.

Another standard notion of fairness is ``envy-freeness'' wherein an agent is said to be envy-free if she has weakly higher value for the set of goods she receives than the set of goods any other agent receives. The instance with the single indivisible items, discussed above, verifies that envy-freeness may not be achievable either, so prior work has focused on notions of approximate envy-freeness, namely ``envy-freeness up to the most valued item'' (EF1)
\cite{budish2011combinatorial}
%\cite{Budish10} 
and ``envy-freeness up to the least valued item'' (EFx) \cite{CKMPS19}.  Similar to PROP1, EF1 allocations are known to exist for any number of agents \cite{Lipton}.  On the other hand, the existence or non-existence of EFx allocations has not been proven in general, and it is one of the main open problems in fair division. 

Plaut and Roughgarden~\shortcite{Plautt2018} demonstrated that EFx allocations always exist for two agents (even with combinatorial valuations) and Chaudhury \textit{et al.}~\shortcite{CGM2020} established the existence of EFx allocations for instances with three agents with additive valuations.  Extending the results in \cite{CGM2020} to more than three agents remains a challenging problem as the proof relies on complex case analysis, much like the proof of existence of PROPm allocations for up to five agents in \cite{PROPm}.  Central to many of the proofs of existence for EF1 and EFx is a variation of a procedure of Lipton \textit{et al.}~\shortcite{Lipton} known as ``envy-cycle elimination'' (see, e.g., \cite{Plautt2018,Chaudhury,oh2019fairly,Amanatidis_Markakis_Ntokos_2020}) wherein a graph representing a given allocation is constructed and an alternative allocation is produced by propagating changes along the edges of the graph.  Our algorithm for generating PROPm allocations has a very similar flavor, beginning from a partial allocation and using a graph analysis to imply a set of changes sufficient to arrive at a PROPm allocation.

Even if some fairness notion is shown to be achievable, it is still crucial to study the computational tractability of finding a solution that satisfies it.
%While the existence of some fairness notions is quite central in the literature, computability of fair allocations is of tantamount importance.
Aziz \textit{et al.}~\shortcite{AMS2020} provided a strongly polynomial-time algorithm producing a PROP1 and Pareto efficient allocation even in the presence of chores (i.e., some goods can have negative value).  For EF1 allocations, Caragiannis \textit{et al.}~\shortcite{CKMPS19} showed that maximizing the Nash social welfare (the geometric mean of the values of the agents) produces an allocation that is EF1 and Pareto efficient. On the other hand, Lee~\shortcite{Lee17} demonstrated that computing this is intractable.  However, the work of Barman \textit{et al.}~\shortcite{BarmanKV18} provided an alternative pseudo-polynomial time algorithm that computes an EF1 and Pareto optimal allocation.  For EFx, the picture is much less clear.  The algorithmic result in \cite{Plautt2018} relies on computing the allocation optimizing the leximin objective which may take exponential time and the result for three agents in \cite{CGM2020} leads only to a pseudo-polynomial time algorithm.  For PROPm, the existing results in \cite{PROPm} are constructive but may require exponential time in the number of items, even for just five agents.  On the other hand, in this work we demonstrate that PROPm allocations for any number of agents can, indeed, be computed in time polynomial in the number of agents and items -- a major improvement over \cite{PROPm}.

%\dnote{I'm not sure if we want to include anything for MMS, leaving it out for now}
%\anote{I agree with you}
%\emph{maximin share} (MMS) defined by \cite{Budish10} is another notion of fairness that is realated to PROPm. MMS asks that if an agent recieve value at least as much as if that agent was to partition all the items into groups and was given the set of items left over after every other agent picked a set. Thus when the agent is partitioning the items into groups, she looks to maximize the minimum share she would get from the worst element of her partition. Similar to many other notions of fairness for indivisible items, an allocation satisfying MMS does not always exist even for instances with three agents \cite{KPW2018}. Thus, just like for proportionality and envy freeness, attempts were made at satisfying relaxations to guarantee agents recieve some fraction of their maximin share. A polynomial time algorithm showing that agents could recieve $2/3$ of this benchmark was shown by \cite{AMNS2017}. This was followed up by \cite{BarmanM17} and \cite{GMT2018} with simpler algorithms for the smae result. These results continued to improved with a $3/4$ approximation in non-polynomial time and a $3/4-\epsilon$ poly-time approximation scheme by \cite{GhodsiHSSY18}. Finally the most recent work that has pushed this constant the fartherst is by \cite{Garg2020} who gave a polynomial time algorithm to find a $3/4$ approximation while also showing the existence of allocations with $3/4+1/12n$ approximations. 

\section{Our Results}
%As demonstrated in \cite{PROPm}, PROPm is an interesting fairness notion that is significantly more demanding than PROP1, yet an allocation satisfying it exists for all instances involving five or fewer agents (unlike PROPx).  In fact, they consider a natural class of relaxation of proportionality and demonstrate that PROPm is the only one that does not fail to exist for instances with more than three agents.  
Prior to this work, we knew that an allocation satisfying PROPm always exists for instances involving up to five agents.
In this paper, we significantly extend this result by providing an algorithm that computes a PROPm allocation for any number of goods and agents. Moreover, our algorithm operates in time polynomial in both the number of agents and items, unlike the algorithm proposed in \cite{PROPm}, which was not polynomial even for a fixed number of agents. In light of these results, PROPm stands out as a rare example of a quite non-trivial fairness notion for which we get universal existence and polynomial-time computability.

Our algorithm employs a useful observation from \cite{PROPm} (see Observation \ref{obs:disjointAlloc} in Subsection 6.2 in this paper) which characterizes the conditions under which an instance can be split into agent- and item-disjoint sub-problems which can, effectively, be solved completely separately, yielding a full solution for the initial instance.  To produce such sub-problems, we consider a novel graph representation of our instance and search for paths through the graph. These paths imply a series of gradual modifications leading to the final decomposition of each problem instance into sub-problems.  We consider this algorithm to be of both practical and theoretical interest.  %PROPm allocations provide non-trivial fairness guarantees and we are able to find such allocations efficiently for any number of agents and items.  Moreover, 

\section{Preliminaries}
We study the problem of allocating a set $\items$ of $m$ indivisible items (or goods) to a set of $n$ agents $N=\{1,2,\dots,n\}$. Each agent $i$ has a value $v_{ij}\geq 0$ for each good $j$ and her value for receiving some subset of goods $S \subseteq \items$ is additive, i.e., $v_i(S) = \sum_{j \in S}{v_{ij}}$. For ease of presentation, we normalize the valuations so that $v_{i}(M) = 1$  for all $i\in \agents$. We also assume that $v_{ij} \leq 1/n$ for all $i \in N, j \in M$, because any item $j$ with $v_{ij}> 1/n$ could be assigned to $i$ and reduce the problem to finding a PROPm allocation of $M \setminus \{j\}$ to $N \setminus \{i\}$.\footnote{This fact is proven as Lemma 2 in \cite{PROPm}.} We let $\minbundlei{S} =\min_{j\in S}\{v_{ij}\}$ denote the value of the least valuable good for agent $i$ in bundle of goods $S$. 

An allocation $X = \allocationverbose$ is a partition of the goods into bundles % such that $X_1 \cup X_2 \cup \cdots \cup X_n = M$ and $X_k \cap X_\ell = \emptyset ~ \forall k,\ell$.  
such that $X_i$ is the bundle allocated to agent $i$.
%
%We now define certain notions of fairness on an allocation $X$. An allocation $X$ is Envy Free (EF) if $\val{i}{i}\ge\val{i}{j}$ $\forall i,j\in N$. Agent $i$ is said to EF envy agent $j$ if $\val{i}{j}>\val{i}{i}$ where $i$'s value for $j$'s bundle is higher than $i$'s value for their own bundle. An allocation $X$ is Envy Free up to any good (EFx) if $\val{i}{i}\ge \val{i}{j}-\min_{g \in X_{j}}(v_{i}(g))$ $\forall i,j \in N$. Agent $i$ is said to EFx envy agent $j$ if $\val{i}{i}\ge \val{i}{j}-\min_{g \in X_{j}}(v_{i}(g))$ where $i$'s value for $j$'s bundle is higher than $i$'s value for their own bundle even when $i$ removes their least favorite good from $g$'s bundle. 
%
%The lowest value good in each bundle from the perspective of each agent is important to track for our fairness objective and many other standard ones (e.g., EFx).  
We use $\maxmini = \max_{i' \neq i}\{\minbundlei{X_{i'}}\}$ to denote the value of the \emph{maximin good of agent $i$ in $X$}, and we say that an agent $i$ is \textbf{$\propXt$-satisfied} by $X$ if $v_i(X_i) + \maxmini \geq 1/n$. An allocation $X$ is $\propXt$ if it $\propXt$-satisfies every agent. 
The goal of our algorithm is to use these \simplebundle s to decompose the problem into smaller sub-problems, and compute a {\propXt} allocation using a divide \& conquer approach. 
A \textbf{\metabundle} $\metaB$ is a pair consisting of a set of  bundles $\bundleCollection = \{A_1, A_{2}, \dots, A_{k}\}$ and a subset of agents $\agents' \subseteq \agents$. 
%such that $|\bundleCollection| \geq |\agents'|$. 
In other words, a {\metabundle} ``matches'' a group of agents with a group of bundles, and our goal is going to be to do so in a way that computing a {\propXt} allocation for each {\metabundle} yields a {\propXt} allocation for the original problem. 
%\vnote{I would possibly defer the definitions of matched and unmatched sub-problems until later (or drop them altogether if we don't really need them)}
%We call a {\metabundle} \textbf{\unassigned} if $|\agents'| = 0$. 
The value of an agent $i$ for a set of bundles $\bundleCollection$ is $v_i(\bundleCollection)=\sum_{A_j \in \bundleCollection}{v_i(A_j)}$. We call a {\metabundle} $\metaB$ \textbf{proportional} if $v_i(\bundleCollection)/|\agents'|\geq 1/n$ for all $i\in \agents'$.

Given a set of  bundles $\bundleCollection$ and a set of agents $\agents'$, 
%with $|\agents'| \leq |\bundleCollection|$, 
a \textbf{\metaallocation} is a division of these agents and  bundles into (bundle and agent) disjoint  {\metabundle}s. %(where {\unassigned} \metabundle s are allowed). %In order to construct a recursive solution will seek ``high value'' \metaallocation s \pnote{... seek \metaallocation s where agents have "high value" for their corresponding meta-bundles}.
For example consider a set of five agents $N'=\{1,2,3,4,5\}$ and a set of five  bundles $\bundleCollection=\{A_1, A_2, A_3, A_4 ,A_5\}$.  One possible decomposition of $(A,N')$ would be into the  disjoint sub-problems
$(\{A_1, A_2, A_3\},\{1,2,3\})$ and $(\{A_4, A_5\},\{4,5\})$. 
%\vnote{I think that we need a clearer explanation/definition of sub-problems and decomposition. Let's provide a small concrete example.} 
We say that a {\metaallocation} for $\metaB$ is \textbf{proportional} if all of its included {\metabundle}s are proportional. 
As we show later on, as long as a {\metaallocation} is proportional, we can focus on solving each of its {\metabundle s} recursively without worrying about the allocation beyond that {\metabundle}. %\vnote{We could maybe also define a decomposition as a collection of disjoint sub-problems.}

Consider, again, the example above of five agents $\{1,2,3,4,5\}$ and five  bundles $\{A_1,...,A_5\}$. For these five agents assume agents 1, 2, and 3 have the same valuation function and assume agents 4 and 5 have the same valuation functions. Let the valuation functions for agents 1, 2, and 3 be $v(A_1)=\frac{1}{4}$ and $v(A_2)=v(A_3)=v(A_4)=v(A_5)=\frac{3}{16}$ and let the valuation function for agents 4 and 5 be $v(A_1)=v(A_2)=v(A_3)=\frac{1}{6}$, $v(A_4)=\frac{3}{20}$, and $v(A_5)=\frac{7}{20}$.  Since valuation functions are additive, we then have that $v_i(A_1\cup A_2\cup A_3)/3\ge 1/5$ for $i \in \{1,2,3\}$ and $v_i(A_4\cup A_5)/2\ge 1/5$ for $i \in \{4,5\}$. Thus, the  decomposition of $(A,N')$ described above $D = ((\{A_1, A_2, A_3\},\{1,2,3\}), (\{A_4, A_5\},\{4,5\}))$ is a proportional decomposition.  We will see that this means that we can solve the two sub-problems of allocating items in $A_1 \cup A_2 \cup A_3$ to agents 1, 2, and 3 such that they are $\propXt$-satisfied with respect to $A_1\cup A_2\cup A_3$ and allocating items in $A_4 \cup A_5$ to agents 4, 5 such that they are $\propXt$-satisfied with respect to $A_4\cup A_5$ to produce an allocation where every agent is $\propXt$-satisfied in the original problem. 

%\vnote{Should we define this as the ``sub-problem graph of a decomposition''? Do we use it for anything else?}
%Given a collection of disjoint sub-problems, we define the \textbf{sub-problem graph} to be a directed graph $G=(V, E)$, where each vertex in $V$ corresponds to a sub-problem and and an edge $(u,w)$ between two vertices $u,w\in V$ exists if and only if the corresponding sub-problems,  $(\bundleCollection_u, \agents_u)$ and $(\bundleCollection_w, \agents_w)$ satisfy the following condition: there exists some agent $k\in \agents_u$ who satisfies $\frac{v_k(\bundleCollection_w)}{|\agents_w|}\geq \frac{1}{n}$. 
%In other words, such an edge exists if and only if removing some agent from $\agents_w$ and replacing her with agent $k$ would maintain the proportionality of the $(\bundleCollection_w, \agents_w)$ sub-problem. 

%Note that this graph does not include all possible sub-problems, but rather just the sub-problems created by a run through of the algorithm. This means that the number of vertices and edges in the graph is bounded polynomially by the number of agents and items as is further detailed in Section \ref{sec:running-time}. Our algorithm uses this graph to gradually transform a given decomposition while maintaining its proportionality.

\section{PROPm Algorithm}
Our algorithm begins by choosing some arbitrary agent $i\in N$ to serve as the ``divider'' (we henceforth use $i$ to refer to the divider agent and $N^{-i}=N\setminus\{i\}$ to refer to the set of all other agents). The divider agent partitions the items into $n$ bundles, and then the algorithm proceeds to evaluate the other agents' preferences over these bundles to decide which one the divider should receive. Once the divider's bundle has been determined, the initial problem is decomposed into smaller sub-problems that are solved recursively.

\subsection*{Stage 1: the divider partitions the goods}
In order to partition the goods, the divider (agent $i$) first sorts them in non-decreasing order of value, from $i$'s perspective, and indexes them accordingly. Then, the first bundle $S_1$ corresponds to the longest prefix of goods in this ordering such that $v_i(S_1)\leq 1/n$. Observe that, by construction, $v_i(S_1) + v_{ij} > 1/n$ for all $j \in M\setminus S_1$.  Moreover, there is at least $(n-1)/n$ total value remaining for $i$ outside $S_1$.  We construct $S_{2}$ by taking the longest prefix of goods in $M \setminus S_1$ such that $i$ has value less than or equal to $1/(n-1) \cdot v_i(M \setminus S_1)$ for receiving all of them. Similarly, we let $S_k$ be the longest prefix of goods in $M \setminus (\cup_{j=1}^{k-1} S_j)$ such that the divider's value for these items remains less than or equal to $1/(n-k+1) \cdot v_i(M \setminus (\cup_{j=1}^{k-1} S_j))$.
%\anote{It should be $1/(n-k+1) \cdot v_i(....)$, right?}
%\vnote{Maybe add an example with actual numbers and, say, 2-3 agents? Not sure if this is worth it, though.}

\subsection*{Stage 2: decomposing into sub-problems}

Using disjoint bundles $S_1, S_2, \dots, S_n$ from the divider's partition, we now decompose the problem into sub-problems, eventually solving them recursively. Specifically, we carefully choose one of these $n$ bundles, say $S_t$, and allocate it to the divider. We then recursively allocate the items of bundles $S_1, \dots, S_{t-1}$ to some group $N_L$ of $t-1$ agents, and the items of bundles $S_{t+1},\dots, S_n$ to some group $N_R$ of $n-t$ agents.

%To choose which simple bundle to allocate to the divider agent, and how to decompose the agents in $N^{-i}$ into the two groups, $N_L$ and $N_R$, we gradually construct, and maintain, a proportional decomposition using a prefix of the simple bundles. This process is divided into (up to $n-1$) rounds, and at the end of each round we either have 

As the pseudocode of Algorithm~\ref{alg:PROPm} shows, the decomposition process works in a sequence of (up to) $n$ iterations, indexed by $t\in \{1, 2,\dots, n\}$. At the beginning of every iteration $t$, the algorithm has already identified a proportional decomposition $D$ involving $t-1$ agents and the  bundles $S_1, S_2, \dots, S_{t-1}$. At the end of step $t$, either the proportional decomposition $D$ has been updated to also include the bundle $S_t$ and a total of $t$ agents (possibly different than the $t-1$ ones that were participating in it at the beginning of the round), or the bundle $S_t$ has been assigned to the divider agent, and the remaining problem has been decomposed into a list of proportional sub-problems. Throughout the execution of the algorithm, $N_R$ is used to denote the set of agents that are not participating in the proportional decomposition $D$.

The first thing that the algorithm does in each iteration $t$ is to evaluate $c$, the number of agents from $N_R$ whose average value for the first $t$ bundles is more than $1/n$. If $c$ is equal to 0, this means that all the agents in $N_R$ essentially ``prefer'' sharing the last $n-t$ bundles instead of the first $t$ bundles. If this is the case, then the algorithm allocates bundle $S_t$ to the divider agent. It then recursively solves the proportional decomposition $D$, whose sub-problems involve $t-1$ agents and the first $t-1$ bundles, and also recursively solves the sub-problem involving the remaining $n-t$ agents (i.e., those in $N_R$) and the items from the last $n-t$ bundles, $S_{t+1}$ to $S_n$.

On the other hand, if the value of $c$ is positive, this suggests that there are agents in $N_R$ that ``prefer'' to share the first $t$ bundles rather than the last $n-t$ bundles. Intuitively, this suggests that the first $t$ bundles are ``over-demanded'', so our algorithm calls \textsc{UpdateDecomposition}, a crucial subroutine, to update decomposition $D$. 
%After each call to this subroutine, we update the entries for $D$, $N_R$, and $c$, which may have changed. 
As we discuss in subsection~\ref{sec:subroutine}, a single execution of this subroutine can have one of two possible outcomes: i) either the value of $c$ decreases by 1, or ii) the number of agents in the decomposition (denoted $|D.agents|$ for notational simplicity) increases by 1. The algorithm keeps calling this subroutine until either the decomposition grows to include bundle $S_t$ and $t$ agents, or $c$ drops to 0. In the former case, it continues to the next iteration (i.e., $t\leftarrow t+1$), otherwise, it assigns $S_t$ to the divider and recursively solves the remaining sub-problems.
%The algorithm keeps calling this subroutine until either $c$ becomes equal to 0 (at which point $S_t$ is assigned to the divider agent and the remaining sub-problems are solved recursively), or the new decomposition grows to include bundle $S_t$ and $t$ agents, at which point we proceed to the next iteration. 
Figure 1 shows an example of the algorithm running on a sample instance. 
%\vnote{Use some other notation for the number of agents in $D$ instead of $|D|$, which can be confusing given that $D$ is a set of sub-problems. It could be something like $|D.agents|$. Update that notation both in the algorithm as well as in the text.}

\begin{algorithm}%[H]
\DontPrintSemicolon
Let $S_1, S_2, \dots, S_n$ be the bundles the divider produces\;
Let $D$ be an, initially empty, decomposition\; 
%$S_0 \leftarrow \emptyset$\;
$N_R \leftarrow N^{-i}$\;
\For{$t=1$ \KwTo $n$}
{
    $c \leftarrow |\{k\in N_R: \frac{v_k(S_1 \cup \dots \cup S_t)}{t} > \frac{1}{n}\}|$ \;
     \While{$c>0$ and $|D.agents|<t$}
    {
        $D \leftarrow$ \textsc{UpdateDecomposition}\;
        $N_R \leftarrow$ subset of $N^{-i}$ not participating in $D$\;
        $c \leftarrow |\{k\in N_R: \frac{v_k(S_1 \cup \dots \cup S_t)}{t} > \frac{1}{n}\}|$ \;
    }
    \If{$|D.agents| < t$}
    {
        Allocate $S_t$ to the divider agent (agent $i$)\;
        Recursively solve all sub-problems of $D$\; \label{algline:leftsubproblems}
        Recursively solve $(S_{t+1} \cup \dots \cup S_n, N_R)$\; \label{algline:rightsubproblems}
        %\vnote{actually, $(S_{t+1}\cup\dots\cup S_n, N_R)$?}
        Return the combined allocation\;
    }
}
 \caption{PROPm Algorithm}\label{alg:PROPm}
\end{algorithm}

\subsection{The \textsc{UpdateDecomposition} subroutine}\label{sec:subroutine}

%OLD VERSION
%The \textsc{UpdateDecomposition} subroutine plays a central role in our PROPm algorithm, and it achieves the desired update of the existing proportional decomposition $D$ by leveraging its sub-problem graph structure. Note that whenever we call this subroutine, the value of $c$ is positive, so there exists at least one agent $k\in N_R$, i.e., not participating in $D$, for whom $v_k(S_1 \cup \dots \cup S_t)/t > 1/n$.

%Let $G$ be the sub-problem graph induced by $D$ in iteration $t$. The sub-problems corresponding to the vertices of this graph involve bundles $S_1,\dots, S_{t-1}$ and some set of $t-1$ agents (\vnote{so the number of vertices is at most $t-1$}). We add to this graph two more vertices. The first vertex, $w_\alpha$, corresponds to the agent $k\in N_R$ mentioned above; this vertex has outgoing edges to all the sub-problems $(\bundleCollection, \agents')$ of $D$ for which $\frac{v_k(\bundleCollection)}{|\agents'|}\geq \frac{1}{n}$. The second vertex, $w_\beta$, corresponds to the bundle $S_t$ that we wish to introduce to this decomposition. 

%NEW VERSION
The \textsc{UpdateDecomposition} subroutine plays a central role in our PROPm algorithm, and it achieves the desired update of the existing proportional decomposition $D$ at iteration $t$ by propagating changes on a carefully constructed graph.  Note that whenever we call this subroutine, the value of $c$ is positive, so there exists at least one agent $k \in N_R$, i.e., not participating in $D$, for whom $v_k(S_1 \cup \dots \cup S_t)/t > 1/n$.

    Given the decomposition of disjoint sub-problems $D$, we construct a directed ``sub-problem graph'' $G=(V, E)$, where each vertex in $V$ corresponds to a sub-problem in $D$ and and an edge $(u,w)$ between two vertices $u,w\in V$ exists if and only if the corresponding sub-problems,  $(\bundleCollection_u, \agents_u)$ and $(\bundleCollection_w, \agents_w)$ satisfy the following condition: there exists some agent $k\in \agents_u$ who satisfies $\frac{v_k(\bundleCollection_w)}{|\agents_w|}\geq \frac{1}{n}$. 
In other words, such an edge exists if and only if removing some agent from $\agents_w$ and replacing her with agent $k$ would maintain the proportionality of the $(\bundleCollection_w, \agents_w)$ sub-problem.

The sub-problems corresponding to the vertices of $G$ involve bundles $S_1,\dots, S_{t-1}$ and some set of $t-1$ agents so the number of vertices in $G$ is at most $t-1$. We add to $G$ two more vertices. The first vertex, $w_\alpha$, corresponds to the agent $k\in N_R$ mentioned above; this vertex has outgoing edges to all the sub-problems $(\bundleCollection, \agents')$ of $D$ for which $\frac{v_k(\bundleCollection)}{|\agents'|}\geq \frac{1}{n}$. The second vertex, $w_\beta$, corresponds to the bundle $S_t$ that we wish to introduce to this decomposition. 
%\pnote{This creates an \emph{unmatched} sub-problem where $|A|>|N'|$ with 1 bundle and 0 agents.} 
This vertex has incoming edges from any vertex whose sub-problem includes an agent $i'$ with value $v_{i'}(S_t)\geq 1/n$. In this graph, let $R$ be the set of vertices that are reachable from $w_\alpha$ via directed paths.

\textbf{Case 1.} If this set $R$ includes the vertex $w_\beta$, corresponding to the bundle $S_t$,  i.e., if there is a path from $w_\alpha$ to $w_\beta$, then the subroutine reallocates agents along the sub-problems of this path. Specifically, for each edge $(u,w)$ on this path, we remove from the sub-problem of $u$ the agent that is responsible for the existence of this edge (we choose one arbitrarily if there are multiple) and we place that agent in the sub-problem of $w$. As a result, $D$ would then include bundle $S_t$ as well as agent $k$, thus increasing $|D.agents|$ and, as we argue in Section~\ref{sec:correctness}, this modification maintains the proportionality of the decomposition.

\textbf{Case 2.} If the set $R$ does not include the vertex $w_\beta$, but it includes some agent $i'$ with $\frac{v_{i'}(S_1 \cup \cdots \cup S_t)}{t} \leq \frac{1}{n}$, then we perform an analogous shift of the agents across the sub-problems along the path from $k$ to $i'$, but remove agent $i'$ from the decomposition and add her to the set $N_R$. This, again, does not compromise the proportionality of the decomposition, but it ensures that the updated value of $c$ will drop by 1 since agent $k$ was removed from $N_R$ and replaced with agent $i'$ who does not contribute toward an increase of the value of $c$.

\textbf{Case 3.} Finally, if neither of the cases above holds, the subroutine takes all the agents and all the bundles corresponding to vertices in $R$ and merges them into a single sub-problem, together with agent $k$ and bundle $S_t$. This, again, increases $|D.agents|$ and as we show using a separate argument in Section~\ref{sec:correctness}, it maintains the proportionality of the decomposition.

\tikzset{every picture/.style={line width=0.75pt}} %set default line width to 0.75pt        

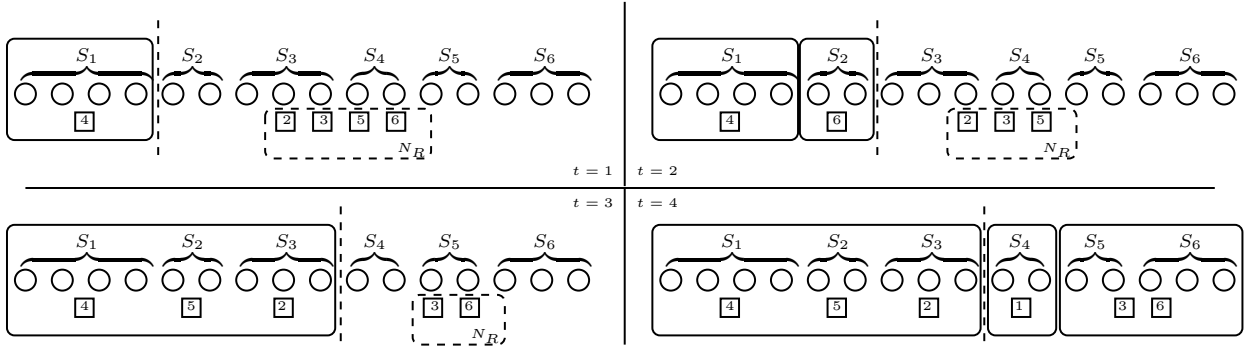
\begin{figure*}

\begin{tikzpicture}[x=0.75pt,y=0.75pt,yscale=-1,xscale=1, scale =.93]
%uncomment if require: \path (0,300); %set diagram left start at 0, and has height of 300

\draw (20,50) circle (.15cm);
\draw (40,50) circle (.15cm);
\draw (60,50) circle (.15cm);
\draw (80,50) circle (.15cm);
\draw (100,50) circle (.15cm);
\draw (120,50) circle (.15cm);
\draw (140,50) circle (.15cm);
\draw (160,50) circle (.15cm);
\draw (180,50) circle (.15cm);
\draw (200,50) circle (.15cm); 
\draw (220,50) circle (.15cm); 
\draw (240,50) circle (.15cm); 
\draw (260,50) circle (.15cm); 
\draw (280,50) circle (.15cm); 
\draw (300,50) circle (.15cm); 
\draw (320,50) circle (.15cm); 
\draw [dashed] (92,10) -- (92,85);

\draw[rounded corners = 1mm] (10,20) rectangle (89,75);
\draw (47,60) rectangle (57,70);
\draw (48,60) node [anchor=north west][inner sep=0.75pt]  [font=\tiny]  {$4$};

\draw (156,60) rectangle (166,70);
\draw (176,60) rectangle (186,70); 
\draw (196,60) rectangle (206,70); 
\draw (216,60) rectangle (226,70); 

\draw[rounded corners = 1mm,dashed] (150,58) rectangle (240,85);
% Text Node
\draw (220,75) node [anchor=north west][inner sep=0.75pt]  [font=\tiny]  {$N_R$};

\draw (315,87.5) node [anchor=north west][inner sep=0.75pt]  [font=\tiny]  {$t=1$};

\draw (157.86,60) node [anchor=north west][inner sep=0.75pt]  [font=\tiny]  {$2$};
% Text Node
\draw (177.86,60) node [anchor=north west][inner sep=0.75pt]  [font=\tiny]  {$3$};
% Text Node
\draw (198,60) node [anchor=north west][inner sep=0.75pt]  [font=\tiny]  {$5$};
% Text Node
\draw (216.5,60) node [anchor=north west][inner sep=0.75pt]  [font=\tiny]  {$6$};

%\draw (10,5) node [anchor=north west][inner sep=0.75pt]  [font=\small]  {$t=1:$};

% Text Node
\draw (16,22) node [anchor=north west][inner sep=0.75pt]    {$\overbrace{\ \ \ \ \ \ \ \ \ \ \ \ \ \ \ \ \ \ \ \ }^{S_{1}}$};
% Text Node
\draw (93,22) node [anchor=north west][inner sep=0.75pt]    {$\overbrace{\ \ \ \ \ \ \ \ \ }^{S_{2}}$};
% Text Node
\draw (135,22) node [anchor=north west][inner sep=0.75pt]    {$\overbrace{\ \ \ \ \ \ \ \ \ \ \ \ \ \ }^{S_{3}}$};
% Text Node
\draw (195,22) node [anchor=north west][inner sep=0.75pt]    {$\overbrace{\ \ \ \ \ \ }^{S_{4}}$};
% Text Node
\draw (234,22) node [anchor=north west][inner sep=0.75pt]    {$\overbrace{\ \ \ \ \ \ \ \ }^{S_{5}}$};
% Text Node
\draw (275,22) node [anchor=north west][inner sep=0.75pt]    {$\overbrace{\ \ \ \ \ \ \ \ \ \ \ \ \ \ }^{S_{6}}$};

%uncomment if require: \path (0,300); %set diagram left start at 0, and has height of 300

\draw (370,50) circle (.15cm);
\draw (390,50) circle (.15cm);
\draw (410,50) circle (.15cm);
\draw (430,50) circle (.15cm);
\draw (450,50) circle (.15cm);
\draw (470,50) circle (.15cm);
\draw (490,50) circle (.15cm);
\draw (510,50) circle (.15cm);
\draw (530,50) circle (.15cm);
\draw (550,50) circle (.15cm); 
\draw (570,50) circle (.15cm); 
\draw (590,50) circle (.15cm); 
\draw (610,50) circle (.15cm); 
\draw (630,50) circle (.15cm); 
\draw (650,50) circle (.15cm); 
\draw (670,50) circle (.15cm); 
\draw[dashed]  (482,10) -- (482,85);

\draw  (345,0) -- (345,100);

%\draw (360,5) node [anchor=north west][inner sep=0.75pt]  [font=\small]  {$t=2:$};

\draw[rounded corners = 1mm] (360,20) rectangle (439,75);
\draw (397,60) rectangle (407,70);
\draw (398,60) node [anchor=north west][inner sep=0.75pt]  [font=\tiny]  {$4$};

\draw[rounded corners = 1mm] (440,20) rectangle (480,75);
\draw (455,60) rectangle (465,70); 
\draw (456,60) node [anchor=north west][inner sep=0.75pt]  [font=\tiny]  {$6$};

\draw (526,60) rectangle (536,70); 
\draw (546,60) rectangle (556,70); 
\draw (566,60) rectangle (576,70); 

\draw[rounded corners = 1mm,dashed] (520,58) rectangle (590,85);
% Text Node
\draw (570,75) node [anchor=north west][inner sep=0.75pt]  [font=\tiny]  {$N_R$};

\draw (350,87.5) node [anchor=north west][inner sep=0.75pt]  [font=\tiny]  {$t=2$};

% Text Node
\draw (527,60) node [anchor=north west][inner sep=0.75pt]  [font=\tiny]  {$2$};
% Text Node
\draw (548,60) node [anchor=north west][inner sep=0.75pt]  [font=\tiny]  {$3$};
% Text Node
\draw (566.5,60) node [anchor=north west][inner sep=0.75pt]  [font=\tiny]  {$5$};

% Text Node
\draw (366,22) node [anchor=north west][inner sep=0.75pt]    {$\overbrace{\ \ \ \ \ \ \ \ \ \ \ \ \ \ \ \ \ \ \ \ }^{S_{1}}$};
% Text Node
\draw (443,22) node [anchor=north west][inner sep=0.75pt]    {$\overbrace{\ \ \ \ \ \ \ \ \ }^{S_{2}}$};
% Text Node
\draw (485,22) node [anchor=north west][inner sep=0.75pt]    {$\overbrace{\ \ \ \ \ \ \ \ \ \ \ \ \ \ }^{S_{3}}$};
% Text Node
\draw (545,22) node [anchor=north west][inner sep=0.75pt]    {$\overbrace{\ \ \ \ \ \ }^{S_{4}}$};
% Text Node
\draw (584,22) node [anchor=north west][inner sep=0.75pt]    {$\overbrace{\ \ \ \ \ \ \ \ }^{S_{5}}$};
% Text Node
\draw (625,22) node [anchor=north west][inner sep=0.75pt]    {$\overbrace{\ \ \ \ \ \ \ \ \ \ \ \ \ \ }^{S_{6}}$};

\end{tikzpicture}

\nointerlineskip

\begin{tikzpicture}[x=0.75pt,y=0.75pt,yscale=-1,xscale=1, scale = .93]
%uncomment if require: \path (0,300); %set diagram left start at 0, and has height of 300

\draw (20,50) circle (.15cm);
\draw (40,50) circle (.15cm);
\draw (60,50) circle (.15cm);
\draw (80,50) circle (.15cm);
\draw (100,50) circle (.15cm);
\draw (120,50) circle (.15cm);
\draw (140,50) circle (.15cm);
\draw (160,50) circle (.15cm);
\draw (180,50) circle (.15cm);
\draw (200,50) circle (.15cm); 
\draw (220,50) circle (.15cm); 
\draw (240,50) circle (.15cm); 
\draw (260,50) circle (.15cm); 
\draw (280,50) circle (.15cm); 
\draw (300,50) circle (.15cm); 
\draw (320,50) circle (.15cm); 
\draw [dashed] (191,10) -- (191,85);

\draw  (345,0) -- (345,85);
\draw (20,0) -- (665,0); 

%\draw (10,5) node [anchor=north west][inner sep=0.75pt]  [font=\small]  {$t=3:$};

\draw[rounded corners = 1mm] (10,20) rectangle (188,80);
\draw (47,60) rectangle (57,70);
\draw (48,60) node [anchor=north west][inner sep=0.75pt]  [font=\tiny]  {$4$};

%\draw[rounded corners = 1mm] (90,20) rectangle (130,80);
\draw (105,60) rectangle (115,70); 
\draw (106,60) node [anchor=north west][inner sep=0.75pt]  [font=\tiny]  {$5$};

%\draw[rounded corners = 1mm] (132,20) rectangle (188,80);
\draw (155,60) rectangle (165,70); 
\draw (155,60) node [anchor=north west][inner sep=0.75pt]  [font=\tiny]  {$2$};

\draw (236,60) rectangle (246,70); 
\draw (256,60) rectangle (266,70); 

\draw[rounded corners = 1mm,dashed] (230,58) rectangle (280,85);
% Text Node
\draw (260,75) node [anchor=north west][inner sep=0.75pt]  [font=\tiny]  {$N_R$};

\draw (315,4) node [anchor=north west][inner sep=0.75pt]  [font=\tiny]  {$t=3$};

% Text Node
\draw (238,60) node [anchor=north west][inner sep=0.75pt]  [font=\tiny]  {$3$};
% Text Node
\draw (256.5,60) node [anchor=north west][inner sep=0.75pt]  [font=\tiny]  {$6$};

% Text Node
\draw (16,22) node [anchor=north west][inner sep=0.75pt]    {$\overbrace{\ \ \ \ \ \ \ \ \ \ \ \ \ \ \ \ \ \ \ \ }^{S_{1}}$};
% Text Node
\draw (93,22) node [anchor=north west][inner sep=0.75pt]    {$\overbrace{\ \ \ \ \ \ \ \ \ }^{S_{2}}$};
% Text Node
\draw (135,22) node [anchor=north west][inner sep=0.75pt]    {$\overbrace{\ \ \ \ \ \ \ \ \ \ \ \ \ \ }^{S_{3}}$};
% Text Node
\draw (195,22) node [anchor=north west][inner sep=0.75pt]    {$\overbrace{\ \ \ \ \ \ }^{S_{4}}$};
% Text Node
\draw (234,22) node [anchor=north west][inner sep=0.75pt]    {$\overbrace{\ \ \ \ \ \ \ \ }^{S_{5}}$};
% Text Node
\draw (275,22) node [anchor=north west][inner sep=0.75pt]    {$\overbrace{\ \ \ \ \ \ \ \ \ \ \ \ \ \ }^{S_{6}}$};

%uncomment if require: \path (0,300); %set diagram left start at 0, and has height of 300

\draw (370,50) circle (.15cm);
\draw (390,50) circle (.15cm);
\draw (410,50) circle (.15cm);
\draw (430,50) circle (.15cm);
\draw (450,50) circle (.15cm);
\draw (470,50) circle (.15cm);
\draw (490,50) circle (.15cm);
\draw (510,50) circle (.15cm);
\draw (530,50) circle (.15cm);
\draw (550,50) circle (.15cm); 
\draw (570,50) circle (.15cm); 
\draw (590,50) circle (.15cm); 
\draw (610,50) circle (.15cm); 
\draw (630,50) circle (.15cm); 
\draw (650,50) circle (.15cm); 
\draw (670,50) circle (.15cm);

\draw (350,4) node [anchor=north west][inner sep=0.75pt]  [font=\tiny]  {$t=4$};

\draw[rounded corners = 1mm] (360,20) rectangle (538,80);
\draw (397,60) rectangle (407,70);
\draw (398,60) node [anchor=north west][inner sep=0.75pt]  [font=\tiny]  {$4$};

\draw (455,60) rectangle (465,70); 
\draw (456,60) node [anchor=north west][inner sep=0.75pt]  [font=\tiny]  {$5$};

\draw (505,60) rectangle (515,70); 
\draw (505,60) node [anchor=north west][inner sep=0.75pt]  [font=\tiny]  {$2$};

%OLD VERSION
%\draw [dashed] (581,10) -- (581,85);
%\draw[rounded corners = 1mm] (540,20) rectangle (579,80);
%\draw[rounded corners = 1mm] (583,20) rectangle (680,80);

%NEW VERSION
\draw [dashed] (540,10) -- (540,85);
\draw[rounded corners = 1mm] (542,20) rectangle (579,80);
\draw[rounded corners = 1mm] (581,20) rectangle (680,80);

\draw (611,60) rectangle (621,70); 
\draw (631,60) rectangle (641,70); 

% Text Node
\draw (611,60) node [anchor=north west][inner sep=0.75pt]  [font=\tiny]  {$3$};
% Text Node
\draw (631.5,60) node [anchor=north west][inner sep=0.75pt]  [font=\tiny]  {$6$};

\draw (555,60) rectangle (565,70); 
\draw (555,60) node [anchor=north west][inner sep=0.75pt]  [font=\tiny]  {$1$};

% Text Node
\draw (366,22) node [anchor=north west][inner sep=0.75pt]    {$\overbrace{\ \ \ \ \ \ \ \ \ \ \ \ \ \ \ \ \ \ \ \ }^{S_{1}}$};
% Text Node
\draw (443,22) node [anchor=north west][inner sep=0.75pt]    {$\overbrace{\ \ \ \ \ \ \ \ \ }^{S_{2}}$};
% Text Node
\draw (485,22) node [anchor=north west][inner sep=0.75pt]    {$\overbrace{\ \ \ \ \ \ \ \ \ \ \ \ \ \ }^{S_{3}}$};
% Text Node
\draw (545,22) node [anchor=north west][inner sep=0.75pt]    {$\overbrace{\ \ \ \ \ \ }^{S_{4}}$};
% Text Node
\draw (584,22) node [anchor=north west][inner sep=0.75pt]    {$\overbrace{\ \ \ \ \ \ \ \ }^{S_{5}}$};
% Text Node
\draw (625,22) node [anchor=north west][inner sep=0.75pt]    {$\overbrace{\ \ \ \ \ \ \ \ \ \ \ \ \ \ }^{S_{6}}$};
\end{tikzpicture}
\caption{The state of our algorithm after the completion of iterations $t\in \{1,2,3,4\}$ on a sample instance with six agents and agent 1 as the divider.  The circles correspond to items, which are grouped together into bundles $\{S_1,\dots,S_6\}$ by the divider. The numbered boxes beneath the items correspond to agents 2 through 6, and each of the larger rounded rectangles, containing bundles and agents, correspond to sub-problems. In each iteration of the algorithm, captured by this figure, the dashed vertical line separates the bundles of the decomposition $D$, on the left, from the remaining bundles. After each iteration $t$, either the decomposition is updated to include bundle $S_t$ (as we see for $t\in \{1,2,3\}$), or bundle $S_t$ is allocated to the divider agent (as wee see for $t=4$), finalizing the set of sub-problems to be solved recursively.}
    \label{fig:my_label}
\end{figure*}

\section{Correctness of the Algorithm}\label{sec:correctness}
To verify the correctness of the algorithm, we first show that it always terminates and returns an allocation (in fact, we demonstrate in Section \ref{sec:running-time} that the algorithm completes in polynomial time). As we verify in subsection~\ref{sec:correctSubroutine}, a single call to the \textsc{UpdateDecomposition} subroutine returns an updated decomposition that has either one additional agent (and bundle) or has reduced the value of $c$ by 1. Since the value of $c$ at the beginning of each iteration can never be more than $n-1$, this ensures that the while loop will always terminate within a finite number of iterations. If for some iteration $t\leq n-1$ the value of $|D.agents|$ drops below $t$, then the algorithm recurses on smaller problems and returns the induced allocation. If, on the other hand, $|D.agents|$ does not drop below $t$ for any iteration $t\leq n-1$, then when $t=n$ we have an empty set $N_R$, necessarily leading to $D$ including every agent except for the divider (meaning $|D.agents|=n-1 < t$). Also, note that the size of the sub-problems solved recursively always strictly decreases.  We demonstrate in subsection \ref{sec:satisfied} that this process yields a proportional allocation for all agents.

\subsection{Correctness of \textsc{UpdateDecomposition}}\label{sec:correctSubroutine}
We now formally prove that after every execution of the \textsc{UpdateDecomposition} subroutine, either $|D.agents|$ increases by 1, or the value of $c$ drops by 1. In both cases, the resulting decomposition remains proportional, throughout the execution of the algorithm. 

First, note that the set $R$ of vertices that are reachable from agent $k$ is bound to be non-empty. This is due to the fact that $v_k(S_1\cup\dots\cup S_t)/t> 1/n$, i.e., the agent's average bundle value is more than $1/n$ and, by pigeonhole principle, there must exist some sub-problem $(\bundleCollection, \agents')$ such that $v_k(\bundleCollection)/|\bundleCollection| \geq 1/n$. Since the set $R$ is not empty, the description of the subroutine in the previous section clearly shows that it always achieves either an increase of $|D.agents|$ or a drop of the value of $c$. Therefore, the rest of this subsection focuses on proving that all of these updates on the decomposition maintain its proportionality.
%\anote{Should we use consistent notation: either $v_k(S_1\cup\dots\cup S_t)$ or $v_k(S_1, \dots, S_t)?$ The second one allows $v_k(\bundleCollection).$ }
\begin{lemma}
Given a proportional decomposition, the described re-allocation of agents across the directed edges of a path in the decomposition's sub-problem graph leads to a new decomposition that remains proportional.
\end{lemma}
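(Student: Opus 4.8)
The plan is to track, sub-problem by sub-problem, how the agent sets change along the path and to observe that the defining condition for each edge of the sub-problem graph is \emph{exactly} the proportionality requirement for the agent moved across it. (The lemma concerns the path-based re-allocations, i.e.\ Cases 1 and 2; the merging step of Case 3 is handled by the separate argument promised in the text.) The central structural fact I would establish first is \textbf{size preservation}. Writing the path as a simple directed path $w_\alpha = v_0 \to v_1 \to \cdots \to v_\ell$, every internal sub-problem $v_j$ (with $1 \le j \le \ell-1$) simultaneously loses exactly one agent---the one responsible for the outgoing edge $(v_j, v_{j+1})$---and gains exactly one agent---the one arriving along the incoming edge $(v_{j-1}, v_j)$---so $|\agents_{v_j}|$ is unchanged. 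Here it matters that we take a \emph{simple} path from the reachable set $R$, so that no vertex is touched twice and the in/out count is exactly one each; the agent-disjointness of the sub-problems in $D$ then guarantees that the departing and arriving agents are distinct and that the arriving agent was not already present.

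With size preservation in hand, proportionality of each internal sub-problem follows immediately. For any agent $c$ that remains in $v_j$, the original proportionality of $D$ gives $v_c(\mathcal{A}_{v_j}) \ge |\agents_{v_j}|/n$, and since the denominator did not change this inequality still certifies $c$'s share. For the single arriving agent $b$, the edge $(v_{j-1}, v_j)$ exists precisely because $v_b(\mathcal{A}_{v_j})/|\agents_{v_j}| \ge 1/n$, which is again the proportionality condition against the unchanged denominator. Thus no agent in an internal sub-problem is made worse off, and this short computation is the heart of the argument.

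It remains to handle the two ends of the path, which is where I expect the only real care to be needed. The source $w_\alpha$ is not a genuine sub-problem---it represents the agent $k \in N_R$---so its only effect is that $k$ is inserted into $v_1$, and the edge $(w_\alpha, v_1)$ condition $v_k(\mathcal{A}_{v_1})/|\agents_{v_1}| \ge 1/n$ is exactly what is needed for $k$'s proportionality there. At the sink the two cases diverge: in Case 1 the final agent leaving $v_\ell$ is placed with the fresh bundle $S_t$ to form the singleton sub-problem $(\{S_t\},\{b\})$, whose proportionality is the edge-into-$w_\beta$ condition $v_b(S_t) \ge 1/n$; in Case 2 the sink is a genuine sub-problem from which we delete the designated agent $i'$ while inserting the arriving agent, so it too reduces to an instance of the size-preserving internal argument. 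Assembling these observations---internal vertices by the size-preservation computation, the source trivially, and the sink by whichever edge condition applies---shows that every sub-problem of the updated decomposition is proportional, which proves the lemma.
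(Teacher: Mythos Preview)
Your proof is correct and follows the same approach as the paper: both rely on the observation that the edge condition $v_b(\mathcal{A}_w)/|\agents_w|\ge 1/n$ is precisely the proportionality requirement for the agent $b$ being shifted into $w$. Your version is more careful than the paper's terse one-paragraph argument---you make explicit the size-preservation fact for internal vertices (so the denominator $|\agents_w|$ does not change) and you treat the path endpoints separately for Cases~1 and~2---but the underlying idea is identical.
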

\begin{proof}
Note that, by definition of the sub-problem graph, any agent that caused the existence of an edge $(u, w)$ must have a value of at least $|N_w|/n$ for the bundles of the sub-problem corresponding to vertex $w$. As a result that agent will still satisfy proportionality if moved from $u$ to $w$. This is true for all the edges on this graph (including the ones connecting the two added vertices $w_\alpha$ and $w_\beta$), ensuring the proportionality is maintained.
\end{proof}

The more demanding case is to verify that proportionality is also maintained by the third type of update that this subroutine performs, i.e., the creation of a sub-problem involving the agents and goods in $R$ as well as agent $k$ and bundle $S_t$.

\begin{lemma}\label{lem:value_bound}
If $\bundleCollection'$ is the collection of all the  bundles corresponding to sub-problems not reachable from $w_\alpha$, excluding $S_t$, then for every agent $q$ from a sub-problem in $R$ we have $v_{q}(\bundleCollection')/|\bundleCollection'| < 1/n$. The same is true for the agent $k$ that corresponds to vertex $w_\alpha$, i.e., $v_k(\bundleCollection')/|\bundleCollection'| < 1/n$.
\end{lemma}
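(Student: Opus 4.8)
The plan is to argue purely by reachability and then aggregate. First I would fix an agent $q$ belonging to a sub-problem $u$ whose vertex lies in $R$, and consider an arbitrary sub-problem $w$ whose vertex is \emph{not} in $R$. If we had $v_q(\mathcal{A}_w)/|N_w| \geq 1/n$, then by the very definition of the sub-problem graph there would be an edge $(u,w)$; but $u$ is reachable from $w_\alpha$, so this edge would make $w$ reachable as well, contradicting $w \notin R$. Hence for every unreachable sub-problem $w$ we obtain the strict per-sub-problem bound $v_q(\mathcal{A}_w) < |N_w|/n$. The agent $k$ of vertex $w_\alpha$ is handled identically: $w_\alpha$ has an outgoing edge to every sub-problem $w$ with $v_k(\mathcal{A}_w)/|N_w| \geq 1/n$, so any $w$ violating $v_k(\mathcal{A}_w) < |N_w|/n$ would be reachable from $w_\alpha$ and thus in $R$.

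The second step is to sum this bound over all unreachable sub-problems. Here I would invoke the invariant, maintained by construction throughout the algorithm (each update introduces exactly one new agent together with exactly one new bundle), that every sub-problem contains equally many bundles and agents, i.e. $|\mathcal{A}_w| = |N_w|$. Since $\mathcal{A}'$ is precisely the union of the $\mathcal{A}_w$ over the regular sub-problems $w \notin R$ (excluding $S_t$ only drops the special bundle-vertex $w_\beta$, which carries no agents), we get $|\mathcal{A}'| = \sum_{w \notin R} |N_w|$ and therefore $v_q(\mathcal{A}') = \sum_{w \notin R} v_q(\mathcal{A}_w) < \sum_{w \notin R} |N_w|/n = |\mathcal{A}'|/n$, which is exactly the claim (and the same computation applies to $k$).

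The point I expect to be most delicate is reconciling the strictness I need with the non-strict threshold $\geq 1/n$ in the edge definition: the contradiction above rules out $\geq 1/n$ and so leaves precisely the strict inequality, and summing at least one strict inequality keeps the total strict; the degenerate case $\mathcal{A}' = \emptyset$ makes the stated ratio vacuous but is harmless. I would close by indicating how the lemma is used in the Case 3 proportionality argument: writing the merged sub-problem's bundles as $(S_1 \cup \cdots \cup S_t) \setminus \mathcal{A}'$ and subtracting the lemma's bound from the hypothesis $v_q(S_1 \cup \cdots \cup S_t) > t/n$ — valid for every agent $q$ in $R$ as well as for $k$, exactly because Case 3 excludes Case 2 — yields value strictly more than $(t - |\mathcal{A}'|)/n$ for a sub-problem that, by the bundle–agent invariant, has exactly $t - |\mathcal{A}'|$ agents, establishing its proportionality.
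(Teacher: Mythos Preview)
Your proof is correct and is essentially the contrapositive of the paper's argument: the paper assumes $v_q(\mathcal{A}')/|\mathcal{A}'| \geq 1/n$, applies pigeonhole to locate a single unreachable sub-problem $w$ with $v_q(\mathcal{A}_w)/|\mathcal{A}_w| \geq 1/n$, and derives the same edge-contradiction, whereas you obtain the strict per-sub-problem bound first and then sum. Both rely on the same two ingredients---the edge definition and the invariant $|\mathcal{A}_w| = |N_w|$---and your write-up has the virtue of making that invariant explicit, which the paper leaves implicit.
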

%\begin{lemma}
%Suppose that the {\metabundle}s in $D_j \cup \hat{D}$ consist of $t$ simple bundles.  Then all agents $k$ from vertices in $D$ have $v_{k}(D_j \cup \hat{D}) < t/n$.
%\end{lemma}
\begin{proof}
Assume that this is not the case, i.e., that either some agent $q$ corresponding to a sub-problem in $R$, or agent $k$, has an average bundle value at least $1/n$ for the bundles in $\bundleCollection'$. By the pigeonhole principle, this implies that there must exist some sub-problem not reachable from $R$ such that this agent's average value for the bundles in that sub-problem is at least $1/n$. But, based on the definition of the sub-problem graph, this would imply the existence of an edge from that agent's vertex to this sub-problem's vertex, contradicting the fact that the latter is not reachable from the former.
%We proceed via contradiction.  Suppose that some agent $k'$ from a vertex in $D$ has $v_{k'}(D_j \cup \hat{D}) \geq t/n$.  We first note by our argument above that for all agents $k$ in {\metabundle}s corresponding to vertices in $D$ that $v_k(D_j) < 1/n$.  It must then be that $v_{k'}(\hat{D}) > (t-1)/n$.  Since this is a {\metaallocation} there are at most $t-1$ agents assigned to bundles in $\hat{D}$.  Moreover, no {\metabundle}s in $\hat{D}$ are {\unassigned} since this is a proportional {\incomplete} {\metaallocation} and $D_j$ is the {\unassigned} {\metabundle}.  However, we note that $\hat{D}$ has no incoming edges, so there must not be any agent who would retain {\propValue} at least $1/n$ if she were to replace an agent from one of the {\metabundle}s in $\hat{D}$.  In particular, this is true for agent $k'$.  So the value that $k'$ has for any {\metabundle} $\metaB$ in $\hat{D}$ is strictly less than $\frac{1}{n} \cdot |N'|$.  Since there are weakly fewer than $t-1$ agents assigned to {\metabundle}s in $\hat{D}$ we have that $v_{k'}(\hat{D}) \leq (t-1)/n$, a contradiction.
\end{proof}

%\begin{lemma}
%Suppose that the {\metabundle}s in $D_j \cup \hat{D}$ consist of $t$ simple bundles.  Then agent $i'$ has $v_{i'}(D_j \cup \hat{D}) < t/n$.
%\end{lemma}
%\begin{proof}
%The proof proceeds nearly identically as above.  It must be that $v_{i'}(D_j) < 1/n$ by definition.  Moreover, since $\hat{D} \cap D_1 = \emptyset$ we know that $i'$ would not have $\propValue$ at least $1/n$ if she were to replace some agent in any {\metabundle} corresponding to vertices in $\hat{D}$. Thus we know that the value $i'$ has for the items in any {\metabundle} $\metaB$ in $\hat{D}$ is strictly less than by $\frac{1}{n} \cdot |N'|$. Then, since $\hat{D}$ is comprised of $t-1$ simple bundles, it must be that $v_{i'}(\hat{D}) < (t-1)/n$.
%\end{proof}

Whenever the subroutine resorts to the third type of decomposition update (Case 3), this means that there is no agent $i'$ in a sub-problem in $R$ such that $\frac{v_{i'}(S_1 \cup \dots \cup S_t)}{t} \leq \frac{1}{n}$ (Case 2). Thus,  agent $k$ and all agents $q$ from sub-problems in $R$ have value for the items in $S_1\cup \dots \cup S_t$ greater than $t/n$. Lemma~\ref{lem:value_bound} implies that their total value for the  bundles in $\bundleCollection'$ is less than $|\bundleCollection'|/n$. Therefore, these agents' value for the items contained in sub-problems in $R$ is at least $(t-|\bundleCollection'|)/n$. 

Also, note that the overall number of agents in the sub-problems of  $G$ (excluding $k$) is $t-1$, and the number of agents in the sub-problems not reachable from $R$ are $|\bundleCollection'|$, so the total number of agents in the sub-problems of $R$ is $t-|\bundleCollection'|-1$. Therefore, if we define a new sub-problem using the agents from $R$, combined with agent $k$ corresponding to vertex $w_\alpha$, and the  bundles from $R$, combined with $S_t$, then this sub-problem will be proportional, because every agent's value for the  bundles in it will be at least $(t-|\bundleCollection'|)/n$ and the number of agents in it is $t-|\bundleCollection'|$. 

\subsection{All agents are PROPm-satisfied} \label{sec:satisfied}
To verify that the induced allocation is always PROPm, we first restate a useful observation from \cite{PROPm}. This observation provides us with a sufficient condition under which ``locally'' satisfying $\propXt$ in each sub-problem yields a ``globally'' $\propXt$ allocation. Given an allocation of a subset of items to a subset of agents, we say that this partial allocation is $\propXt$ if the agents involved would be $\propXt$-satisfied if no other agents or items were present.
%We are able to find allocations more readily because $\propXt$ is a more ``localized'' property than, say, EFx. 
%Provided that some subset of agents has enough total value for a given subset of items, we can recursively solve the problem on this smaller set of agents.  More concretely, at several points in the construction of $\propXt$ allocations we will use the following observation. 

\begin{observation}\label{obs:disjointAlloc}
Let $N_1, N_2$ be two disjoint sets of agents, let $M_1$ and $M_2 = M \setminus M_1$ be a partition of the items into two sets, and let $X$ be an allocation of the items in $M_1$ to agents in $N_1$ and items in $M_2$ to agents in $N_2$. Then, if some agent $i \in N_1$ is $\propXt$-satisfied with respect to the partial allocation of the items in $M_1$ to the agents in $N_1$, and $\frac{v_i(M_1)}{|N_1|} \geq \frac{1}{|N_1 + N_2|}$, then $i$ is $\propXt$-satisfied by $X$ regardless of how the items in $M_2$ are allocated to agents in $N_2$.  
\end{observation}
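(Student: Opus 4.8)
The plan is to unpack the two notions of PROPm-satisfaction at play — satisfaction of agent $i$ within the stand-alone sub-problem $(M_1, N_1)$ versus satisfaction in the full allocation $X$ over $N_1 \cup N_2$ — and to connect them by a short chain of inequalities. For a fixed $i \in N_1$ the only quantity that differs between the two notions is the maximin term, so the crux will be to observe that this term can only grow when we pass from the sub-problem to the full instance.

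First I would record exactly what the hypothesis provides. Since $i \in N_1$ and $X$ assigns the items of $M_1$ precisely to the agents of $N_1$, the bundle $X_i$ is identical whether we regard $X$ as a stand-alone allocation of $M_1$ to $N_1$ or as the restriction of the full allocation. By definition, saying that $i$ is PROPm-satisfied with respect to the partial allocation of $M_1$ to $N_1$ means that, were no other agents or items present, the proportional share of $i$ would be $v_i(M_1)/|N_1|$ and the relevant maximin good would range only over the bundles held by agents in $N_1 \setminus \{i\}$. That is, $v_i(X_i) + \max_{i' \in N_1 \setminus \{i\}} m_i(X_{i'}) \geq v_i(M_1)/|N_1|$.

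Next I would compare the two maximin values. The maximin good $d_i(X)$ in the full instance is the maximum of $m_i(X_{i'})$ over all $i' \in (N_1 \cup N_2)\setminus\{i\}$, whereas the sub-problem maximin is the maximum of the same nonnegative quantity over the smaller index set $N_1 \setminus \{i\}$. Maximizing over a superset of agents can only increase the value, so $d_i(X) \geq \max_{i' \in N_1 \setminus \{i\}} m_i(X_{i'})$. (If $N_1 = \{i\}$ the sub-problem maximin is vacuous, taken to be $0$; in that degenerate case $X_i = M_1$ and $i$ is trivially satisfied.) Chaining this with the displayed inequality and the proportionality hypothesis $v_i(M_1)/|N_1| \geq 1/(|N_1|+|N_2|)$ yields
$$v_i(X_i) + d_i(X) \;\geq\; v_i(X_i) + \max_{i' \in N_1 \setminus \{i\}} m_i(X_{i'}) \;\geq\; \frac{v_i(M_1)}{|N_1|} \;\geq\; \frac{1}{|N_1| + |N_2|},$$
which is exactly the assertion that $i$ is PROPm-satisfied by $X$, no matter how $M_2$ is distributed among $N_2$.

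I do not expect a genuine obstacle: the entire content is the monotonicity of the maximin term under enlarging the agent set, combined with correctly reading the local proportional share as $v_i(M_1)/|N_1|$ rather than $1/|N_1|$. The single point deserving care is that $X_i$ (and hence $v_i(X_i)$) is untouched by how $M_2$ is allocated, which is precisely what makes the guarantee robust to any reallocation within $N_2$; the agents of $N_2$ hold only items of $M_2$, so they can introduce new candidates for the maximin good but can never diminish it.
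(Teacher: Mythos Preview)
Your argument is correct: the only thing that changes between the sub-problem and the full instance is the maximin term, and maximizing over a larger index set can only increase it, so the chain of inequalities goes through exactly as you wrote. The paper itself does not prove this observation at all---it is merely restated from prior work (\cite{PROPm})---so there is no alternative approach to compare against; what you have is the natural (and essentially unique) verification.
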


We now prove a result regarding the partition implied by the divider agent's preferences, which is analogous to a theorem that is shown by \cite{PROPm} for a different, much more complicated, partition of the items.
\begin{theorem}\label{thm:recSpx}
%Let $S_1, S_{2}, \dots, S_n$ be the  bundles defined by the divider agent's partition. 
If the divider agent receives any bundle $S_{\ell}$ and no item from $S_1 \cup S_{2} \cup \cdots \cup S_{\ell - 1}$ is allocated to the same agent as an item from $S_{\ell + 1} \cup S_{\ell + 2} \cup \cdots \cup S_n$, then agent $i$ will be $\propXt$-satisfied.
%The agent will be PROPm satisfied if she receives any bundle $S_{\ell}$ for any allocation of the remaining items as long as the items in $S_n \cup S_{n-1} \cup \cdots \cup S_{\ell + 1}$ are in disjoint bundles from the items in $S_{\ell - 1} \cup S_{\ell - 2} \cup \cdots \cup S_1$. 
\end{theorem}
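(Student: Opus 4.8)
The plan is to reduce everything to a single telescoping estimate on how much value the divider retains in the ``tail'' of its partition, and then to exploit the sorted structure of the bundles together with the left/right disjointness hypothesis. Write $W_\ell = v_i(S_\ell \cup S_{\ell+1} \cup \cdots \cup S_n)$, so that $W_1 = 1$ and $W_{\ell+1} = W_\ell - v_i(S_\ell)$. The defining property of the divider's partition is the invariant $v_i(S_k) \le \frac{1}{n-k+1} W_k$ for every $k$ (this is exactly the threshold used when forming $S_k$, and for $k=1$ it reads $v_i(S_1)\le 1/n$). First I would use this to prove, by a telescoping product, that $W_\ell \ge \frac{n-\ell+1}{n}$ for all $\ell$: indeed $W_{k+1} \ge W_k \cdot \frac{n-k}{n-k+1}$, and multiplying these inequalities from $k=1$ to $\ell-1$ collapses to $W_\ell \ge \frac{n-\ell+1}{n}$. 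Equivalently, $\frac{W_\ell}{\,n-\ell+1\,} \ge \frac1n$.

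With this in hand, I would split into two cases. If $v_i(S_\ell) \ge \frac 1n$ then agent $i$ is immediately $\propXt$-satisfied since $d_i(X) \ge 0$. Otherwise $v_i(S_\ell) < \frac 1n$; combined with $W_\ell \ge \frac{n-\ell+1}{n}$ this forces $W_{\ell+1} = W_\ell - v_i(S_\ell) > 0$, so the tail bundles $S_{\ell+1},\dots,S_n$ are nonempty and in particular $\ell \le n-1$. Let $g$ be the first (hence least valuable to $i$) item lying outside $S_1 \cup \cdots \cup S_\ell$, i.e.\ the first item of $S_{\ell+1}$. Since $S_\ell$ was chosen as the longest prefix of its remaining pool meeting the threshold, adjoining $g$ must break it: $v_i(S_\ell) + v_{ig} > \frac{W_\ell}{\,n-\ell+1\,} \ge \frac 1n$, whence $v_{ig} > \frac 1n - v_i(S_\ell)$. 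Because the items are sorted in non-decreasing order of $i$'s value and $g$ is the smallest item to the right of $S_\ell$, every item in $S_{\ell+1} \cup \cdots \cup S_n$ has value to $i$ at least $v_{ig} > \frac 1n - v_i(S_\ell)$.

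Finally I would invoke the disjointness hypothesis. Since $W_{\ell+1} > 0$, some right-hand item is allocated to a non-divider agent $i'$; by hypothesis $i'$ receives no item from $S_1 \cup \cdots \cup S_{\ell-1}$, so $X_{i'}$ consists entirely of items from $S_{\ell+1} \cup \cdots \cup S_n$. Hence every item in $X_{i'}$ has value to $i$ exceeding $\frac 1n - v_i(S_\ell)$, so $m_i(X_{i'}) > \frac 1n - v_i(S_\ell)$ and therefore $d_i(X) \ge m_i(X_{i'}) > \frac1n - v_i(S_\ell)$, giving $v_i(S_\ell) + d_i(X) > \frac 1n$ as required. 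The step I expect to be the real crux is the telescoping bound $W_\ell \ge \frac{n-\ell+1}{n}$, since it is what ties the divider's own share to a guaranteed high-value item held elsewhere (and it cleanly subsumes the $\ell=n$ / empty-tail situation into the trivial case). The conceptual heart, though, is recognizing why the disjointness hypothesis is indispensable: without it a right-hand agent could dilute its bundle with a tiny left-hand item, destroying the maximin lower bound even though the total value on the right is large.
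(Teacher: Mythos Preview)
Your proposal is correct and follows essentially the same route as the paper's proof. Both arguments rest on the same three ingredients: the telescoping lower bound $v_i(S_\ell\cup\cdots\cup S_n)\ge (n-\ell+1)/n$ (your $W_\ell$), the ``longest prefix'' overshoot inequality $v_i(S_\ell)+v_{ig}>W_\ell/(n-\ell+1)\ge 1/n$ for the next item $g$, and the disjointness hypothesis to extract an agent whose bundle lies entirely to the right. Your explicit case split on $v_i(S_\ell)\ge 1/n$ and your identification of the witness agent $i'$ make the edge cases (empty tail, $\ell=n$) cleaner than the paper's somewhat terser version, but the underlying argument is the same.
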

\begin{proof}
%The proof of this statement is not much harder than Observation~\ref{obs:spxSatisfies}.  
For all $k \in [n]$, we have $v_i(S_k) \leq \frac{1}{(n+1-k)}v_i(M \setminus (S_{1} \cup S_{2} \cup \dots \cup S_{k-1}))$ by definition of $S_k$.  Applying this upper bound on $v_i(S_k)$ for $k = 1$, because $v_i(M) = 1$ we have that $v_i(M\setminus S_1) \geq 1 - \frac{1}{n} = \frac{n-1}{n}$.  By  applying the upper bound on $v_i(S_k)$ for $k = 2$ and our lower bound on $v_i(M \setminus S_1)$ we get $v_i(M \setminus (S_1 \cup S_2)) \geq \frac{n-1}{n} - \frac{1}{n-1} \cdot \frac{n-1}{n} \geq \frac{n-2}{n}$.  Iteratively repeating this process, we obtain that for all $k \in [n]$ we know that $v_i(M \setminus (S_{1} \cup S_{2} \cup \dots \cup S_{k})) \geq \frac{n-k}{n}$.  Also by definition,  we have that $v_i(S_\ell) + \text{min}_{j \in M \setminus (S_{1} \cup S_{2} \cup \dots \cup S_{\ell})}\{v_{ij}\} \geq \frac{1}{(n+1-\ell)} \cdot v_i(M \setminus (S_{1} \cup S_{2} \cup \dots \cup S_{\ell-1})) \geq \frac{1}{n+1-\ell} \cdot \frac{n-(\ell-1)}{n} = \frac{1}{n}$.  But finally, as long as the items from $S_{1} \cup S_{2} \cup \dots \cup S_{\ell-1}$ are not included in any of the bundles containing the items in $M \setminus (S_{1} \cup S_{2} \cup \dots \cup S_{\ell})$ in the complete allocation $X$, we have that $d_i(X) \geq \text{min}_{j \in M \setminus (S_{1} \cup S_{2} \cup \dots \cup S_{\ell})}\{v_{ij}\}$ so $i$ is $\propXt$-satisfied when allocated set $S_{\ell}$.
\end{proof}

%Using these observations and constructions, we recursively construct a $\propXt$ solution for $n$ agents.  Our construction selects an arbitrary agent to be the ``divider'' who finds $n$ sets $\{S_i\}_{i \in [n]}$ as in Theorem \ref{thm:recSpx}.  We then aim to find an assignment of the remaining agents to these  such that we terminate with a $\propXt$ solution or a partition of these sets and the remaining agents such that we can recursively find a $\propXt$ solution on the subproblems. \pnote{Theorem 4 provides the crux for how we are always able to do this. It gives us the flexibility to always insert the divider agent into partitions where not enough agents have high value for the partition with the guarantee that the divider will be satisfied by PROPm}  By leveraging Observation \ref{obs:disjointAlloc}, these recursive solutions remain a $\propXt$ solution in the initial problem of $n$ agents. 

\begin{lemma}
The divider agent is always PROPm-satisfied.  All non-divider agents are always PROPm-satisfied as well.
\end{lemma}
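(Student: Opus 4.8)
The plan is to handle the two claims separately: the divider's satisfaction follows directly from Theorem~\ref{thm:recSpx}, while the non-divider agents require an induction on the number of agents combined with Observation~\ref{obs:disjointAlloc}.

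For the divider, I would simply invoke Theorem~\ref{thm:recSpx} with $\ell = t$, where $t$ is the iteration at which the algorithm exits and assigns $S_t$ to the divider. At that point the algorithm recursively allocates the bundles $S_1, \dots, S_{t-1}$ to the agents of the decomposition $D$ (the group $N_L$) and the bundles $S_{t+1}, \dots, S_n$ to the agents of $N_R$. Since $N_L$ and $N_R$ are disjoint, no agent ever receives both an item from $S_1 \cup \dots \cup S_{t-1}$ and an item from $S_{t+1} \cup \dots \cup S_n$, which is precisely the non-crossing hypothesis of Theorem~\ref{thm:recSpx}. Hence the divider is $\propXt$-satisfied.

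For the non-divider agents I would argue by induction on $n$, the base case $n=1$ being immediate since the lone agent receives all items. For the inductive step, the crucial point is that every recursively solved sub-problem is proportional and involves strictly fewer than $n$ agents, so the inductive hypothesis yields a \emph{locally} $\propXt$ allocation within each sub-problem. I then lift local satisfaction to global satisfaction via Observation~\ref{obs:disjointAlloc}: for an agent $j$ in a sub-problem $(\bundleCollection, N')$, I set $M_1$ to be the items of $\bundleCollection$ and $N_1 = N'$, so that $|N_1| + |N_2| = n$. The proportionality of the sub-problem, $v_j(\bundleCollection)/|N'| \geq 1/n$, supplies exactly the bridging condition of the observation, while the recursive solution supplies the local $\propXt$ premise; the observation then guarantees that $j$ is $\propXt$-satisfied in the full allocation $X$.

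The one step requiring genuine work is establishing that \emph{all} recursively solved sub-problems are proportional. The sub-problems of $D$ are proportional by the loop invariant maintained throughout the algorithm, which is justified by the correctness of \textsc{UpdateDecomposition} proved in Section~\ref{sec:correctSubroutine}. The remaining case is the sub-problem $(S_{t+1} \cup \dots \cup S_n, N_R)$, and here I would translate the exit condition of the while loop into a per-agent guarantee: the algorithm assigns $S_t$ to the divider only when $c = 0$, meaning every $k \in N_R$ satisfies $v_k(S_1 \cup \dots \cup S_t)/t \leq 1/n$, so $v_k(S_1 \cup \dots \cup S_t) \leq t/n$. Since $v_k(M) = 1$, this yields $v_k(S_{t+1} \cup \dots \cup S_n) \geq (n-t)/n$, and dividing by $|N_R| = n - t$ gives $v_k(S_{t+1} \cup \dots \cup S_n)/(n-t) \geq 1/n$, which is precisely the proportionality of the $N_R$ sub-problem. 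I expect this verification to be the main obstacle, since it is the only place where the specific termination condition $c = 0$ must be converted into the value bound needed to feed Observation~\ref{obs:disjointAlloc}; with it in hand, the inductive lift completes the argument for every non-divider agent.
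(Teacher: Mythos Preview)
Your proposal is correct and follows essentially the same approach as the paper: Theorem~\ref{thm:recSpx} for the divider, and Observation~\ref{obs:disjointAlloc} combined with the proportionality of the sub-problems for everyone else. Your treatment is in fact more careful than the paper's, which never explicitly verifies that the $N_R$ sub-problem $(S_{t+1}\cup\dots\cup S_n, N_R)$ is proportional; your derivation of this from the exit condition $c=0$ (together with $|N_R|=n-t$) is exactly the missing step, and your explicit induction on $n$ makes the recursive lifting precise where the paper only gestures at it.
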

\begin{proof}
Note that the divider agent always receives a bundle $S_t$ in some iteration $t$. All the items from bundles $S_1, \dots, S_{t-1}$ are allocated to the agents that were in the decomposition $D$ at that time, while all the items from bundles $S_{t+1},\dots,S_n$ are allocated to the agents that were in $N_R$ at the time (and hence not in $D$). Then, given Theorem~\ref{thm:recSpx}, we conclude that the divider agent is always PROPm-satisfied.

Now observe that no agent other than the divider agent is directly allocated a bundle by our algorithm. Instead, the allocation to the other agents is decided recursively in some recursive call of a smaller sub-problem, when they are assigned the role of the divider. The important thing to verify is that PROPm-satisfying these agents in a recursive call, based on a subset of the agents and a subset of the goods, does, in fact, imply that they are PROPm-satisfied with respect to the original problem instances as well. 

In order to ensure this fact, we combine the statement of Observation~\ref{obs:disjointAlloc} with the definition of proportional sub-problems and decompositions. In particular, our definition of proportionality for a sub-problem guarantees that the conditions of Observation~\ref{obs:disjointAlloc} are met. Since we ensure that proportionality is maintained after every execution of the \textsc{UpdateDecomposition} subroutine, we guarantee that the combination of PROPm allocations for the generated sub-problems yields a PROPm allocation for the original problem.
\end{proof}

\section{Running Time}\label{sec:running-time}
We now move to demonstrate that Algorithm \ref{alg:PROPm} completes in time polynomial in the number of agents and items. Lines 1 through 9 correspond to the ``divide phase'' and lines 10 through 14 correspond to the ``conquer phase''. % Observe that the sum of the number of agents in the recursive calls in the conquer phases is $n-1$ and the number of items is at most $m-1$ so it just remains to show that the dividing phase takes polynomial time.

The running time of Algorithm \ref{alg:PROPm} can be expressed as 
\begin{equation*}
    T(m,n) = f(m,n) + \sum_{j=1}^k T(m_j,n_j),
\end{equation*}
where $f(m,n)$ denotes the cost of the main call with $m$ items and $n$ agents, and the sum captures the cost of the recursive calls. The number of recursive calls is $k$ (equal to the number of sub-problems from lines \ref{algline:leftsubproblems} and \ref{algline:rightsubproblems}), while $m_j$ and $n_j$ are the number of items and agents, respectively, of the $j$-th sub-problem. Since all the sub-problems consist of distinct bundles and agents, we must have $\sum_{j=1}^k m_j \leq m-1$ and $\sum_{j=1}^k n_j \leq n-1$. Thus the width of any level of the recursion tree is at most $\max\{m,n\}$. Furthermore, since the size of each sub-problem strictly decreases through the recursion, the recursion tree has at most depth $\min\{m,n\}$. This means the total number of vertices in the recursion tree is polynomial in $n$ and $m$. All that remains is to show $f(m,n)$ is polynomially bounded in $m$ and $n$.

%First observe that producing the simple bundles of the divider takes $O(m \log m)$ time.  This is because sorting the items in non-decreasing order of value takes $O(m \log m)$ time and once the items are sorted a linear pass over the items suffices to generate the simple bundles.  It remains to verify that the body of the for loop takes polynomial time.

Producing the divider's bundles takes polynomial time since it requires only sorting the items in non-decreasing order of value and a linear pass over the sorted items.  In the body of the for loop of Algorithm \ref{alg:PROPm}, computing the initial value of $c$ takes time linear in the number of agents and items by asking each agent their value for each item in $S_1 \cup \dots \cup S_t$.  This initial value of $c$ is at most $n-1$.  As demonstrated in subsection~\ref{sec:correctSubroutine}, at each iteration of the while loop (beginning at line 6) in Algorithm \ref{alg:PROPm}, the \textsc{UpdateDecomposition} subroutine either increases the value of $|D.agents|$ from $t-1$ to $t$ or decreases the value of $c$.  Thus, the number of iterations of the while loop is at most $n$ at any iteration of the for loop.  

We now demonstrate that the body of the while loop takes polynomial time. Note that computing the new value of $c$ after then \textsc{UpdateDecomposition} subroutine takes linear time (in the number of agents and items). To verify that \textsc{UpdateDecomposition} also takes polynomial time, observe that the sub-problem graph induced by $D$ in iteration $t$ of the for loop contains at most $t-1$ vertices (which would occur when all $t-1$ agents in $D$ and bundles are assigned to distinct sub-problems).  We then add two additional vertices $w_\alpha$ and $w_\beta$ so there are at most $t + 1 = O(n)$ vertices overall at any iteration of the for loop.  Note that checking if an edge exists between two vertices in the graph takes time linear in the number of agents and goods in the two sub-problems and each sub-problem has at most $n$ agents and $m$ items.  Thus, we can construct the graph in time $O(n\cdot(n + m))$.  Finally, computing the set $R$ of reachable vertices from $w_\alpha$ can be accomplished by a simple breadth-first-search which is known to take time linear in the number of the vertices and edges in the graph.  Since updating the decomposition just requires propagating changes along a path of length at most $n$, the entire \textsc{UpdateDecomposition} process takes polynomial time.

\section{Conclusion}

In this paper, we solve the problem of computing PROPm allocations among agents with additive valuations, but we leave open another interesting problem proposed in \cite{PROPm}: the question of the existence and computation of an \emph{average-EFx} (a-EFx) allocation. To determine if an agent is a-EFx satisfied by some allocation, we remove $i$'s least favorite item from each other agent's allocated bundle, and then ask that $i$'s value for her own bundle is at least as high as her average value for all the other agents' bundles. This notion is stronger than PROPm and may provide an interesting stepping stone toward the, much harder, EFx problem.

\section*{Acknowledgements}
The first author gratefully acknowledges support from the HSE University Basic Research Program. The last two authors were partially supported by NSF grants CCF-2008280 and CCF-2047907. We would also like to thank Maxim Timokhin for helpful feedback toward improving our algorithm.

\newpage
\bibliographystyle{named}
\bibliography{PROPm_Nagents}

\end{document}